\tikzstyle{arrow} = [thick,->,>=stealth]
\begin{document}

\newcommand{\seqnum}[1]{\href{https://oeis.org/#1}{\rm \underline{#1}}}

\theoremstyle{plain}
\newtheorem{theorem}{Theorem}
\newtheorem{corollary}[theorem]{Corollary}
\newtheorem{lemma}[theorem]{Lemma}
\newtheorem{proposition}[theorem]{Proposition}

\theoremstyle{definition}
\newtheorem{definition}[theorem]{Definition}
\newtheorem{example}[theorem]{Example}
\newtheorem{conjecture}[theorem]{Conjecture}

\theoremstyle{remark}
\newtheorem{remark}[theorem]{Remark}
\newcommand{\pf}{{\rm PF}}
\newcommand{\tp}{{\rm TP}}
\newcommand{\emd}{{\rm EMD}}
\def\R{\mathbb{R}}
\def\N{\mathbb{N}}

\begin{center}
\epsfxsize=4in
 
\end{center}

\begin{center}
\vskip 1cm{\LARGE\bf Runs, Squares, Palindromes, and Unbordered Factors of a Family of Binary Pattern Sequences with the All-One Pattern}
\vskip 1cm
\large
Russell Jay Hendel\\ 
Towson University\\
Towson, Maryland 21252\\
USA\\
\href{mailto:email}{RHendel@Towson.Edu}  
\end{center}

\vskip .2 in
\begin{abstract}
This paper presents results  on maximal runs, order of squares, palindromes, and unbordered factors of members of the family of binary pattern sequences with the all-one pattern. Restricting ourselves to binary pattern sequences with the all-one pattern with at least three ones,   five categories of maximal run lengths and 3 categories of orders of squares are presented,   palindromes with locally maximal length as well as palindromes with the second to fifth-largest palindrome lengths are described, and unbordered factors of lengths powers of two are presented. Interestingly, the characteristic functions of specified prefixes of sequences of the  2-kernel of these sequences can be formulated using the Vile and Jacobsthal sequences. Both Mathematica and Walnut are employed for exploratory pattern analysis. Proofs are based on a correspondence between binary strings under concatenation and integers under addition and multiplication. It is observed that this correspondence seems  most efficacious  for proofs of theorems whose statements are classified at low levels in the arithmetic hierarchy. 
\end{abstract}

\section{Introduction and Main Results}
The following conventions and notations, some of which are quite standard, are used throughout the paper. 

\begin{itemize}
\item  We let $\#,$ $|w|.$ $|w|_1, \bar{w},$ and  $xy$ or $x \cdot y$ respectively refer to cardinality,  the length of $w,$   the number of occurrences of the letter $1$ in the word $w,$ the binary complement of the binary string $w,$ and the concatenation of strings $x$ and $y.$  
\item We let  $aT+b = \{at_i +b\},$ where $T=\{t_i,i \in I\}$ is a set of integers indexed by $I$ and $a,b$ are integer constants. We let $s[I]=(s[i])_{i \in I}$ for $s$ a sequence and $I$ an indexing set.  
\item For non-negative integers, $a,b,$ $a^b$ represents concatenation if $a \in \{0,1\}$ and exponentiation otherwise.
\item For $n$ a binary string,   $n_v$ indicates its numerical value.  For $n$ a number, $n_2$ indicates its binary representation.  Except for $P,s,V,J,$ sequences defined below,  the paper does not use variables with subscripts, and hence, $n_2, i_2, j_2, x_2$ will have unambiguous meaning. 
\item When the meaning is unambiguous, we freely use arithmetic operations on binary strings, e.g. $1 \cdot 0 \cdot 1+1 \cdot 1 = 8$ is shorthand for $101_v + 11_v = 8,$ which in turn is shorthand for $(101)_v +(11)_v=8.$ Similarly, e.g., $1^m +1 = 2^m, m\ge 1,$ is shorthand for $ (1^m)_v + 1 = 2^m.$
\item  $e$ will refer to a variable over the non-negative even integers so that, e.g., $x \ge e+1$ means $x$ is a positive odd integer.
\item Although we let $s[b .. b+l-1] = s[b] s[b+1] \cdots s[b+l-1],$ with $s$ a sequence, $b$ a beginning position, and $l$ a factor length and $s[b \dotsc b+l-1] = (s[b+i])_{0 \le i \le l-1},$ 
we nevertheless, if there is no ambiguity, treat sequences as words and,  for example, refer to their prefixes, factors, and suffixes. Additionally, when convenient,  the paper  alternatively uses $s_{n}$ or $s[n]$ where $s$ is some sequence and similarly, for double sequences, $x_{m,n}$ and  $x_m[n]$ have the same meaning. 
\end{itemize}

 To motivate the object of study of this paper, recall that the $n$-th term of the Thue-Morse sequence, $n \ge 0,$ \seqnum{A010060},  is the parity of the number of ones occurring in the binary representation of 2. Similarly, the $n$-th term of the Rudin-Shapiro sequence, $n \ge 0,$ \seqnum{A020985},  is the parity of the number
 of occurrences of the binary string $11$ in the  binary representation of $n.$  
These facts immediately suggest the following natural generalization. 

\begin{definition} For $m \ge 1,$ define $P_m = 1^m.$ Define the family of sequences $(s_m)_{m \ge 1}$ by letting $s_{m,n}$ equal the parity of (possibly overlapping) occurrences of $P_m$ in $n_2,  n \ge 0.$      
\end{definition} 

 The $\{s_m\}_{m \ge 1}$ are the family of binary pattern sequences with the all-one pattern. Although their linear complexity  has been discussed \cite{Merai}, to this author's knowledge, the properties of their runs, squares, palindromes, and unbordered factors have not been studied. 

The main results of this paper relate the family of sequences $(s_m)_{m \ge 1}$ to certain basic concepts in automata theory.
Standard references for automata theory are \cite{Allouche, Rigo1, Rigo2, Shallit}. However, a problem arises in connection with
proof methods. Theorem \ref{the:main} proves that each $s_m$ is automatic. Hence, proofs about assertions of first order 
statements are decidable, and these days, can be accomplished by software \cite{Shallit, Mousavi}. However, the double sequence
$(s_{m,n})_{\{m \ge 1, n \ge 0\}}$ is not automatic (since, as shown in Theorem \ref{the:main}, the minimal DFAO accepting $s_m$ grows
unbounded as $m$ goes to infinity). Consequently, results about the sequence $(s_m)_{m \ge 1}$ require a proof method to establish
them. This paper employs a proof method exploiting a natural correspondence between integers under addition and multiplication and binary strings 
under concatenation.  The following Dictionary lemma, stated without proof (because the results are clear) presents basic facts 
about this correspondence. 

\begin{lemma}[Dictionary] \label{lem:Dictionary} 
\begin{enumerate}
\item[(i)]  For $e \ge 0,$ there are $e+1$ occurrences of $P_m$ in $P_{m+e}. $  
\item[(ii)]  For binary strings $x,y,$ $(x \cdot y)_v = x_v 2^{|y|} + y_v.$
\item[(iii)]
For $u,v,w,x \ge 0,$ $(1^u \cdot 0^v \cdot 1^w \cdot 0^x)_v  = (2^u-1) 2^{v+w+x} + (2^w-1)  2^{x}.$ 
\item[(iv)] For $j>m \ge 1, n \ge 1, (2^j n + 2^j - 2^m - 1)_2  = n_2 \cdot 1^{j-m-1} \cdot 0 \cdot 1^m,$
and $(2^j n +2^j-1)_2 = n_2 \cdot 1^j.$
\item[(v)]  In performing addition of binary representations of numbers, a useful technique, is to 
\textit{align exponents in the suffixes}, that is, rewrite the binary representations so that exponents
 of 0 and 1 align. For example, the addition $1^i \cdot 0^m + 1^{m-i}  \cdot  0^i$ is facilitated by 
rewriting it as $1^i \cdot 0^{m-i} \cdot 0^i + 1^{m-i} \cdot 0^i$ with two exponents aligned, facilitating
an immediate calculation of the numerical answer of $1^m \cdot 0^i.$ 
\end{enumerate}
\end{lemma}

In the sequel, it will be assumed, without being explicit,   
that any numerical computation on binary representations relies on this lemma. 

To avoid excessive narrative on technicalities associated with ranges, the Main Theorem and other results in this paper are formulated
for $m \ge 3.$ There is no loss of generality
in this, since the results for $m=1,2$ are well known and can both be discovered and proven by software.  
 
\begin{theorem}[Main]\label{the:main} For $m \ge 3,$ we have the following:
\begin{enumerate}
\item[(a)] There is a $2m$-state, deterministic, finite, automaton, with output (DFAO) accepting $s_m,$ whose transition function, $\delta,$ is given by \eqref{equ:delta}.
\begin{equation}\label{equ:delta}
\delta(q_i,a)= 
	\begin{cases} 
	\begin{aligned}
        q_0,			&\qquad  \text{ if } i \in \{0,\dotsc,m-1\}  					&		\text { and } a=0  \\ 
       q_{m+1}, 	&\qquad   \text{ if } i \in \{m,\dotsc, 2m-1\}   				&		\text { and } a=0   \\ 
	 q_{i+1},		&\qquad \text{ if } i \in \{0,\dotsc,m-1, m+1, \dotsc,  2m-2\} 	&		\text { and } a=1 \\
        q_{m-1},		&\qquad \text{ if } i=m	 								&		\text { and } a=1\\	 
        q_m,		&\qquad \text{ if }  i=2m-1  								&		\text { and } a =1.   
	\end{aligned}
	\end{cases}    
 \end{equation}

 The DFAO outputs 0 on states $q_i, 0 \le i \le m-1,$ and 1 otherwise.

 \item[(b)] Moreover, this DFAO is minimal.  
 \item[(c)-(e)] For   the remainder of the theorem statement, we need to first define sequences and recall a result.
For $m \ge 3,$   define index sequences $ (K'_i)_{0 \le i \le 2m-1}$ by 
\begin{align}\label{equ:2kernel}
K'_i = \begin{cases}
     (2^i n + 2^i-1)_{n \ge 0},				&   i \in \{0, \dotsc, m\} \\
     (2^i n + 2^i-1-2^m)_{n \ge 0}, 			&
  i \in \{m+1, \dotsc, 2m-1\}.
\end{cases}   
\end{align} 
The $(s_m[K'_i])_{i \in \{0,\dotsc,2m-1\}}$    are part of  the 2-kernel of $s_m.$ By a theorem of \cite{Eilenberg}, for
any $m,$ if $s_m$ is automatic, the distinct elements of the 2-kernel are finite and form a DFAO accepting $s_m$  whose transition function is given by \eqref{equ:deltaK}. 
\begin{alignat}{2}\label{equ:deltaK}
\delta_K(s_m[K'_i],a)=
\begin{cases}
s_m[(2^{i+1}n+2^i-1)_{n \ge 0}], 			&\qquad \text{ if } 0 \le i \le m,a=0\\
s_m[(2^{i+1}n+2^{i+1}-1)_{n \ge 0}], 		&\qquad \text{ if } 0 \le i \le m,a=1 \\
 s_m[(2^{i+1}n+2^{i}-2^m-1)_{n \ge 0}], 		&\qquad \text{ if } m+1 \le i \le 2m-1,a=0\\
s_m[ (2^{i+1}n+2^{i+1}-2^m-1)_{n \ge 0}], 	&\qquad \text{ if } m+1 \le i \le 2m-1,a=1. 
\end{cases}
\end{alignat}

\item[(c)] Let
\begin{equation}\label{equ:l}
            l = 2^{m-1}
\end{equation}
and define  
\begin{equation}\label{equ:2kernelprime}
K_i = \text{ length-$l$ prefix of $K'_i$}, \qquad 0 \le i \le 2m-1.
\end{equation}
The characteristic functions of the length-$l$ sequences $s_m[K_i], 0 \le i \le 2m-1,$
are given by \eqref{equ:kij} where $(V_k)_{k \ge 1}$ are the Vile numbers,
\seqnum{A003159},  and ${(J_k)}_{k \ge 0}$ is the Jacobsthal sequence, \seqnum{A001045}. 
For  $0 \le j \le l-1,$
 \begin{equation}\label{equ:kij}
	s[ K_{i,j}]=1 \leftrightarrow 
        \begin{cases}
        j \in \emptyset, 							&  i=0,\\ 
        j \in 2^{m-i}(V_k)_{1 \le k \le J_i}-1,  			& 1 \le i \le m, \\
        \bar{K}_{i-(m+1),j}=0, 						& m+1 \le i \le 2m-1.
\end{cases}
\end{equation} 
\item [(d)] The $s_m[K_i], 0 \le i \le 2m-1,$ are distinct.

\item [(e)] Under the correspondence, $q_i \leftrightarrow s_m[K_i], i \in \{0,\dotsc, 2m-1\},$ the automata defined by \eqref{equ:delta} and \eqref{equ:deltaK} are equivalent. More specifically,
\begin{equation}\label{equ:deltadeltaK}
\text{For $0 \le i \le 2m-1$ and $a \in \{0,1\}$,} \delta(q_i,a)=q_j \leftrightarrow
		\delta_K(q_i, a) =L \text{ and }    s_m[L] =s_m[K_j].
\end{equation} 

\end{enumerate}
\end{theorem}

Figure \ref{fig:s4} and Table \ref{tab:kernel} illustrate several parts of the theorem. 
\begin{center}
\begin{figure}[!ht]

\begin{tikzpicture}

\draw (-3,12) circle (.75cm);  

\draw (0,12) circle (.75cm); 	
\draw (0,9) circle (.75cm); 	
\draw (0,6) circle (.75cm); 	
\draw (0,3) circle (.75cm);	

\draw (3,3) circle (.75cm);	
\draw (6,3) circle (.75cm);	

\draw(6,6) circle (.75cm);	

\draw [arrow] (-2.25,12) -- node[anchor=south] {1} (-0.75,12);		
\draw [arrow] (0,11.25) -- node[anchor=east] {1} (0,9.75);			
\draw [arrow] (0,8.25) -- node[anchor=east] {1} (0,6.75);			
\draw [arrow] (0,5.25) -- node[anchor=east] {1} (0,3.75);			
\draw [arrow] (0,3.75) -- node[anchor=west] {1} (0,5.25);			

\draw [arrow] (0.75,3) -- node[anchor=south] {0} (2.25,3);			
 \draw [arrow] (3.75,3) -- node[anchor=south] {1} (5.25,3);			

\draw [arrow] (6,3.75) -- node[anchor=east]  {1} (6,5.25);			

 \draw[->] (-3,12.75)  arc (-90:270:.35)
 node[anchor=south] {0} (-3,12.75);			
  \draw[->] (3,2.25)  arc (90:-270:.35)
 node[anchor=north] {0} (3,2.25);				

  \draw [arrow] (-.35,11.4) -- node [anchor=south]
 {0} (-3,11.25);							
 \draw [arrow] (-.75,9) -- node [anchor=south]
 {0} (-3,11.25);							
  \draw [arrow] (-.75,6) -- node [anchor=south]
 {0} (-3,11.25);							

  \draw [arrow] (5.25,6) -- node [anchor=south]
 {1} (0,3.75);									
   \draw [arrow] (5.35,3.5) -- node [anchor=south]
 {0} (3,3.75);
   \draw [arrow] (5.25,6) -- node [anchor=south]
 {0} (3,3.75);
  \draw [arrow] (-.75,6) -- node [anchor=south]
 {0} (-3,11.25);

\node at (-3,12) {0/0};		
\node at (0,12) {1/0};		
\node at (0,9) {2/0};		
\node at (0,6) {3/0};		
\node at (0,3) {4/1};		
\node at (3,3) {5/1};		
\node at (6,3) {6/1};		
\node at (6,6) {7/1};		

\end{tikzpicture}

\caption{The DFAO for $s_4.$}\label{fig:s4}
\end{figure}
\end{center}

\begin{center}
\begin{table}[!ht] 
\begin{center}
\begin{tabular}{||c|c|||} 
\hline \hline
Index Set, $K$ & $s_m[K]$\\ 
\hline \hline
$K_0$ & $00000000$\\
$K_1$ &$00000001$\\
$K_2$ &$00010000$\\
$K_3$ & $01000101$\\
$K_4$ &  $10111010$\\
$K_5$ &  $11111111$\\
$K_6$ & $11111110$\\
$K_7$ &  $11101111$\\ 
\hline \hline
\end{tabular}
\caption
{ These eight, distinct, length-8 prefixes of the 2-kernel sequences for $s_4,$  
illustrate Theorem \ref{the:main}(c)-(e).  }
\label{tab:kernel} 
\end{center} 
 \end{table}
\end{center}

\section{Proof of Theorem \ref{the:main}(a), (b)}
\begin{proof}  
Let $p=P_j \cdot 0 \cdot t$ with $j \ge 1,$ and $t$ a non-empty binary string. (We must separately 
consider the case $p=P_j$ but the proof is similar and hence omitted). By \eqref{equ:delta} $s_m[t] \in \{0,1\}.$
Table \ref{tab:smoutput} describes the state 
at which the DFAO defined by \eqref{equ:delta} terminates after inputting $p.$  It is then immediately clear that this DFAO
counts the number of possibly overlapping occurrences of $P_m$ in $p,$ completing the proof.

\begin{center}
\begin{table}[!ht] 
\begin{center}
\begin{tabular}{||c|c|c||} 
\hline \hline
\text{Value of $j$} 				&	$s_m[t] = 0$				& 	$s_m[t] = 1$\\ 
\hline 
 $j \in \{1
 ,\dotsc,m-2\}$			& 	$j$						&	$m+1+j$			\\
$j=m-1$						&	$m-1$					&	$m$				\\
$  j =m+e$					&	$m$						&	$m-1$			\\
$  j = m+e+1$					&	$m-1$					& 	$m$				\\
\hline 
\end{tabular}
\caption
{ Terminal state of the DFAO described by \eqref{equ:delta}  on input $p= P_j \cdot 0 \cdot t.$  }
\label{tab:smoutput} 
\end{center} 
 \end{table}
\end{center}

$\mathbf{Proof \; of \; Theorem \ref{the:main}(b).}$ The most straightforward proof uses the underlying idea in 
 the algorithm for computing the minimal state (e.g. \cite{Valmari}). More specifically, it is easy to check that no two states are compatible, that is, for any $i,j \in \{0,...,2m\}, 
i \neq j,$ $\{\delta(q_i,0), \delta(q_i,1)\} \neq \{\delta(q_j,0), \delta(q_j,1)\}.$  Alternatively, the proof of the minimality of states can  be justified by presenting inputs which differentiate states.   For example, for $1 \le i \le m-1,$ inputting  $P_{m-i}$ to state $q_i$   would result in output 1, while inputting $P_{m-i}$ to   state $q_j, 0 \le j \le i-1,$ would result in output 0; hence, state $q_i$ cannot be eliminated. Similar arguments apply to the remaining states. 
\end{proof}

\section{Some Preliminary Lemmas}

Theorem \ref{the:main}(c) provides an explicit form for the $(K_i)_{0 \le i \le 2m-1}$ which facilitates proving parts (d) and (e).  To prove part (c), we will first need three  number-theoretic lemmas.

\begin{lemma}\label{lem:2ivkm1} 
For $i \ge 0,$ 
$$
2^i (V_k)_{k \ge 1} -1 = \{n \ge 0:n_2 \text{ has a  suffix } 01^e 1^i\}.
$$
 
\end{lemma}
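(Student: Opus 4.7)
The plan is to translate the suffix condition on $n_2$ into a factorization statement about $n+1$, and then appeal to the defining property of the Vile numbers: A003159 consists of the positive integers whose binary expansion ends in an even number of zeros. The argument splits into three short equivalences that chain together, each nearly automatic.

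First I would restate this characterization in convenient form: a positive integer $V$ belongs to $\{V_k\}_{k \ge 1}$ iff $V_2$ ends in $1 0^{f}$ for some non-negative even $f$. Next I would invoke the elementary carry-propagation fact that incrementing a number whose binary expansion ends in $0 1^{t}$ turns that suffix into $1 0^{t}$; hence $n_2$ has suffix $0 1^{e+i}$ (allowing the leading $0$ to be an implicit high-order pad) iff $(n+1)_2$ has suffix $1 0^{e+i}$. Finally, $(n+1)_2$ ends in $1 0^{e+i}$ for some even $e \ge 0$ iff $n + 1 = 2^i V$ where $V_2$ ends in $1 0^{e}$ for some even $e \ge 0$, which by the first step holds iff $V$ is a Vile number. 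Letting $V$ range over $\{V_k\}_{k \ge 1}$ yields $n + 1 \in 2^{i}(V_k)_{k \ge 1}$, i.e., $n \in 2^{i}(V_k)_{k \ge 1}-1$, proving the desired set equality.

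The only point requiring any care is the boundary case in which the leading $0$ of the suffix $0 1^{e+i}$ is not an actual digit of $n_2$ but the conventional high-order pad; this occurs precisely when $n+1$ is a power of $2$ (for instance $n=0$, $e=i=0$, corresponding to $V_1 = 1$). The standard convention of padding $n_2$ with leading zeros as needed handles this cleanly, so I do not anticipate any serious obstacle to the proof.
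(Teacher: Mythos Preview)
Your proposal is correct and is essentially the paper's own argument, read in the opposite direction. The paper starts from the Vile characterization $\{n:n_2\text{ has suffix }10^e\}$, appends $0^i$ by multiplying by $2^i$, and then subtracts $1$ to flip the trailing block; you start from $n$, add $1$ to flip the trailing $1$-block, and then strip off $2^i$ to land on a Vile number. The ingredients (Vile $=$ even trailing-zero count, shift $=$ append zeros, $\pm 1$ $=$ carry/borrow on the trailing block) and the logical content are identical, and your explicit remark about the leading-$0$ padding convention is a nice touch that the paper leaves implicit.
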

\begin{proof}
\begin{align*}
	(V_k)_{k \ge 1} 			&= \{n \ge 0: n_2 \text{ has suffix }   10^e\} 	 &\rightarrow \\
 	2^i (V_k)_{k \ge 1, i \ge 0}	&= \{n \ge 0 : n_2 \text{ has suffix }   10^e0^i \}  &\rightarrow \\ 
	2^i (V_k)_{k \ge 1, i \ge 0}-1 &= \{n \ge 0: n_2 \text{ has suffix } 01^e 1^i \}.    &
\end{align*}
\end{proof}

\begin{lemma}\label{lem:jkle2i} For $i \ge 0,$
$$\#(V_k)_{1 \le k \le 2^i} = J_{i+1}.$$
\end{lemma}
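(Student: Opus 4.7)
The plan is to set up a two-term recurrence for $f(i) := \#\{v \in [1,2^i] : v \text{ is Vile}\}$ and to verify that the Jacobsthal numbers satisfy the same recurrence with matching initial condition, so that a short induction closes the argument.

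To derive the recurrence, I would partition $[1,2^i]$ into its $2^{i-1}$ odd and $2^{i-1}$ even integers. Every odd integer has zero trailing zeros, so is Vile, contributing $2^{i-1}$ to $f(i)$. Every even integer in $[1,2^i]$ is uniquely $2m$ with $m \in [1, 2^{i-1}]$; since doubling adjoins exactly one trailing zero, $2m$ is Vile iff $m$ is not Vile, so the even integers contribute $2^{i-1} - f(i-1)$ Vile numbers. Summing yields
\begin{equation*}
f(i) = 2^i - f(i-1), \qquad i \ge 1,
\end{equation*}
with base value $f(1) = 1$ (immediate from $V_1 = 1$, $V_2 = 3$).

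On the Jacobsthal side, the recurrence $J_{n+1} = J_n + 2J_{n-1}$ together with $J_1 = J_2 = 1$ yields, by a one-line induction or by the closed form $J_n = (2^n - (-1)^n)/3$, the identity
\begin{equation*}
J_{i+1} + J_i = 2^i, \qquad \text{i.e.,} \qquad J_{i+1} = 2^i - J_i.
\end{equation*}
Since $f(1) = 1 = J_2$, the sequences $(f(i))_{i \ge 1}$ and $(J_{i+1})_{i \ge 1}$ agree at $i=1$ and satisfy identical recurrences, so $f(i) = J_{i+1}$ for every $i \ge 1$.

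The proof is essentially elementary; there is no real obstacle. The only conceptual step is observing that multiplication by $2$ toggles the Vile property, which is precisely what pairs with the Jacobsthal identity $J_i + J_{i+1} = 2^i$ to make the recurrences identical.
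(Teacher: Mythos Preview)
Your argument is correct. Both proofs establish a one-step recurrence for the count $L_i=\#\{V_k:1\le V_k\le 2^i\}$ and then match it to the Jacobsthal numbers, but the combinatorial decompositions differ. The paper splits $[1,2^{i+1}]$ into its lower and upper halves via the bijection $n\mapsto n+2^i$ (prepending a leading $1$), which \emph{preserves} the number of trailing zeros and hence the Vile property; after accounting for the boundary points $2^i,2^{i+1}$ this yields $L_{i+1}=2L_i-(-1)^i$, which is then shown to imply the Jacobsthal recursion $L_{i+2}=L_{i+1}+2L_i$. You instead split $[1,2^i]$ into odds and evens via $m\mapsto 2m$ (appending a trailing $0$), which \emph{toggles} the Vile property; this gives the cleaner recurrence $f(i)=2^i-f(i-1)$ with no sign correction, and you match it directly against the identity $J_{i+1}+J_i=2^i$ (which is exactly the paper's Lemma~\ref{lem:jkpowerof2}). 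Your route is a bit more economical: it avoids the separate boundary analysis of $2^i$ and sidesteps the extra step of deducing the three-term Jacobsthal recursion from the two-term one. One small quibble: when you state the recurrence for $i\ge 1$ you implicitly use $f(0)$, so either note that $f(0)=1$ or start the induction at $i=1$ with $f(1)=1$ verified directly and apply the recurrence only for $i\ge 2$.
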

\begin{proof}
 For non-negative integer $n,$ $n_2$ and $1 \cdot n_2$ have the same number of trailing $0s$ (without loss of generality, we may left pad $n_2$ with zeroes to achieve a uniform length).  This implies that for $i \ge 0$ 
$$\#\{V_k: 1 \le V_k \le 2^{i}-1\} =
\# \{V_k: 2^{i}+1 \le V_k \le 2^{i+1}-1\}.$$ Trivially, 
 for $i \ge 0,$ $2^i \in (V_k)_{k \ge 1}$ iff $i \equiv 0 \pmod{2}.$

It immediately follows that for $i \ge 0,$
$$L_i = \#\{V_k:1 \le V_k \le 2^i\}=
\#\{V_k:2^i+1 \le V_k \le 2^{i+1}\} +(-1)^i,$$
and therefore $L_i$ satisfies the recursive relationship 
$$ L_{i+1} = 2L_i - (-1)^i, \qquad i \ge 0.$$  But then $(L_i)_{i\ge 1}$ must also satisfy the Jacobsthal recursion since for $i \ge 0,$
$$L_{i+1} + 2L_{i} = 2L_{i+1} +(-1)^i = L_{i+2}.$$
The inductive proof is completed by confirming the base case, $\# L_0 = J_1.$ 
\end{proof}

\begin{lemma}\label{lem:jkpowerof2} $$J_{m-1} + J_m  = 2^{m-1},m \ge 1.$$
\end{lemma}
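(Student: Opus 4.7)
The plan is to prove this Jacobsthal identity by a straightforward induction on $m$, leveraging the defining recursion $J_{k+1}=J_k+2J_{k-1}$ with initial values $J_0=0$, $J_1=1$.

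First I would verify the base case $m=1$: we have $J_0+J_1 = 0+1 = 1 = 2^0$, so the identity holds. For the inductive step, assuming $J_{m-1}+J_m = 2^{m-1}$, I would expand $J_m+J_{m+1}$ using the recurrence $J_{m+1} = J_m + 2J_{m-1}$ to obtain
\begin{equation*}
J_m + J_{m+1} \;=\; J_m + (J_m + 2J_{m-1}) \;=\; 2(J_m + J_{m-1}) \;=\; 2 \cdot 2^{m-1} \;=\; 2^m,
\end{equation*}
which is exactly the identity for $m+1$.

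Alternatively, the same result falls out immediately from the closed form $J_n = (2^n - (-1)^n)/3$: summing gives $(2^{m-1}+2^m)/3$ since the $(-1)^{m-1}$ and $(-1)^m$ terms cancel, and $(2^{m-1}+2^m)/3 = 2^{m-1}(1+2)/3 = 2^{m-1}$. Since there is no real obstacle here — the identity is a one-line consequence of either the recurrence or the closed form — I would prefer the inductive argument for its self-contained nature and present the closed-form derivation only as a remark.
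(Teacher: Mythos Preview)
Your proof is correct and essentially identical to the paper's: both argue by induction, using the Jacobsthal recurrence to show that the sum $J_{m-1}+J_m$ doubles at each step, and verify the base case $m=1$. The closed-form remark is a nice addition but not used in the paper.
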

\begin{proof} The  Jacobsthal recursion implies $J_{m-1}+J_m = 2(J_{m-1} + J_{m-2})$ showing that the expression
$J_{m-1}+J_m$ satisfies the same recursion satisfied by the sequence  $(2^i)_{i \ge 0}.$ The inductive proof is completed by confirming the base case when $m=1$  or $m=2.$
\end{proof}

\section{Proof of Theorem \ref{the:main}(c)}

We must prove each of the three cases to the right of the braces listed in \eqref{equ:kij}.

\textbf{The case $i=0.$} Since, by \eqref{equ:l}, $(P_m)_v = 2^m-1 > l - 1 = 2^{m-1}-1$ it immediately follows that   
$$K_0 = (n)_{0 \le n \le l-1} = 0^l.$$ 

\textbf{The case $1 \le i \le m.$} Suppose for some $j,$ $1 \le n 
\le l,$ that	 $j=2^{m-i}V_k,$ for some $k$.  
Then by \eqref{equ:2kernel} and \eqref{equ:2kernelprime}
\begin{align*}
&\; K_{i,j-1}=1 && \leftrightarrow \;  \\
&\;  2^i(j-1)+2^i-1 \text{ has an odd number of occurrences of $P_m$}
& \leftrightarrow\\
&\; (2^i(j-1)+2^i-1)_2 \text{ has a suffix $0 1^e  1^m$} &
\leftrightarrow \\
&\; 2^i(j-1)+2^i-1 = 2^m V_k -1 \text{ for some $k$},  &
\leftrightarrow\\
&\; j = 2^{m-i} V_k \text{ for some $k$},
\end{align*}
the next to last equivalence following from Lemma \ref{lem:2ivkm1}.

\textbf{The case $m+1  \le i \le 2m-1.$} For each $i,$ $0 \le i \le m-2,$ 
$$K_i = \{(2^i n+2^i-1)_2: n \in \{0,\dotsc, l-1\}\}  =\{ n_2 \cdot 1^i: n \in \{0,\dotsc, l-1\}\}$$ while
$$K_{m+1+i} = \{2^{i+m+1} n + 2^{i+m+1}-2^m-1)_2: n \in \{0,\dotsc, l-1\}\} =
\{n_2 \cdot 1^i \cdot 0 \cdot 1^m: n \in \{0,\dotsc, l-1\}\}.$$  

It immediately follows that $K_{m+1+i}$ has one extra occurrence of $P_m,$ and therefore, the number of occurrences of $P_m$ in $K_i$ has opposite parity to the number of occurrences in $K_{m+1+i}.$

\section{Proof of Theorem \ref{the:main}(d)}

By inspection, the result is true for $m=3.$ We therefore assume for the rest of the proof that $m \ge 4.$
The proof consists of a collection of cases, according to the index of $K_i, 0\le i \le 2m-1.$

\begin{itemize}
\item \textbf{Distinctness of $K_i, i=0,\dotsc, m.$} By Theorem \ref{the:main}(c), $K_0$ has no 1s, while for $i \ge 1,$ the first 1 in $K_i$ occurs at position $2^{m-i}-1.$  
\item \textbf{Distinctness of $K_i, i =m+1, \dotsc, 2m-1.$} By Theorem \ref{the:main}(c), $K_i = \bar{K}_{i-(m+1)}.$  Therefore, the distinctness of the 
$K_i$ in the indicated range follows from the just-proved distinctness of the $K_{i-(m+1)}$ in that range.
\item \textbf{Distinctness of $K_i, i=0,\dotsc,m$ and $K_i, i=m+1,\dotsc,i=2m-1.$} This case has 3 subcases.
\begin{itemize} 
\item \textbf{The case of $i=0.$} By Theorem \ref{the:main}(c), $K_0$ has no 1s while all other $K_i, 1 \le i \le 2m-1$ have 1s implying $K_0$ is distinct from them. 
\item \textbf{The case $1 \le i \le m-1.$} By Theorem \ref{the:main}(c), the first 1 in $K_i,$ occurs at position $2^{m-i}-1 \neq 0,$
while the first one in $K_i, m+1 \le i \le 2m-1,$ occurs at position 0 implying they are distinct. 
\item \textbf{The case $i=m.$} We suppose to the contrary that 
 $K_m=K_{m+1+j}$  for some $j \in \{0, \dotsc, m-2\}\}$ and derive a contradiction. 
By Theorem \ref{the:main}(c), $K_j = \bar{K}_{m+1+j},$ implying
$K_m + K_j = 1^m,$ (the ``+" indicating component-wide addition of vectors and $1^m$ is regarded as the vector of $m \; ones$). 
By Theorem \ref{the:main}(c),  the number
of ones in $K_k, 1 \le k \le m$ equals $J_k.$ 
This implies  $J_m + J_j = 2^{m-1}=J_m +J_{m-1},$ the last equality justified by 
Lemma  \ref{lem:jkpowerof2}. But then for some j, $0 \le j \le m-2,$
$J_j = J_{m-1}$ contradicting the strict monotonicity of $(J_m)_{m \ge 3}.$
This completes the proof.
\end{itemize}
\end{itemize}  
These cases taken together show that the $K_i, 0 \le i \le 2m-1$ are all distinct, completing the proof
of Theorem \ref{the:main}(d).

 \section{Proof of Theorem \ref{the:main}(e).}

We must show \eqref{equ:deltadeltaK}.  This, in turn, requires checking seven cases 
which are presented in Tables \ref{tab:e1}-\ref{tab:e3}.   
The final step of checking that the parity of the  number of occurrences of $P_m$
in $L$ and $K_j$ are equal is straightforward; for example, for Column E, this follows from the equal parity of
$m-1$ and $m+1,$ while for Column C it follows from the fact that $i-(m+1) <m.$

The proof of the Main Theorem is complete.

\begin{center}
\begin{table}[!ht] 
\begin{center}
\begin{tabular}{||c|c|c|c||} 
\hline \hline
ID & A & B & C \\ 
\hline 
Range $\;i$ & 			$0 \le i \le m-1$ & 			$i=m$ & 					$m+1 \le i \le 2m-1$ \\
\hline 
$\delta(q_i,0)=q_j$ & 	$q_j=q_{0}$ &	 		$q_j=q_{m+1}$ & 			$q_j=q_{m+1}$ \\
 $(K_j)_2				$&$n_2$&				$n_2 \cdot 0 \cdot 1^m$&	$n_2 \cdot 0 \cdot 1^m$\\		
\hline 
$K_i$ & 				$(2^i n + 2^i-1)$ & 		$2^m n + 2^m-1$ & 		$2^{i}n+2^{i}-2^m -1 $ \\
$(K_i)_2$ & 			$n_2 \cdot 1^i$ & 			$n_2 \cdot 1^m$ & 			$n_2 \cdot 1^{i-(m+1)} \cdot 0 \cdot 1^m$\\
$\delta_K(s_m[K_i],0)=L$ &	$L=2^{i+1}n+2^i-1$ &		$L=2^{m+1}n+2^m -1$ &		$L=2^{i+1}n + 2^{i}-2^m-1 $ \\
$L_2$ 			&$n_2 \cdot 0 \cdot 1^i$ &	$n_2 \cdot 0  \cdot 1^{m}$&$n_2 \cdot 0 \cdot 1^{i-(m+1)} \cdot 0 \cdot 1^m$ \\ 
\hline \hline
\end{tabular}
\caption
{Three of the seven cases proving Theorem \ref{the:main}(e). Notation follows \eqref{equ:2kernel}, \eqref{equ:2kernelprime}, and \eqref{equ:deltadeltaK} . Table
\ref{tab:e1} is continued in Tables \ref{tab:e2}- \ref{tab:e3}.}
\label{tab:e1} 
\end{center} 
 \end{table}
\end{center}

\begin{center}
\begin{table}[!ht] 
\begin{center}
\begin{tabular}{||c|c|c||} 
\hline \hline
ID & D & E \\
\hline 
Range $\;i$ & 			$0 \le i \le m-1$ & 			$i=m$  		 \\
\hline
$\delta(q_i,1)=q_j$ & 	$q_j=q_{i+1}$ & 			$q_j=q_{m-1}$  \\
 $(K_j)_2$&			$n_2 \cdot 1^{i+1}$ & 		$n_2 \cdot 1^{m-1}$  \\
\hline
$K_i$ & 				$2^i n + 2^i-1$ & 			$2^m n + 2^m-1$  \\
$(K_i)_2$ & 			$n_2 \cdot 1^i$ & 			$n_2 \cdot 1^m$  		 \\
$\delta(K_i,1)=L$ &	$L=2^{i+1}n+2^{i+1}-1$ &	$L=2^{m+1}(n+1)-1$ \\		
$L_2$  &				$n_2  \cdot 1^{i+1}$ &		$n_2 \cdot 1^{m+1}$ \\ 
\hline \hline
\end{tabular}
\caption
{The next  two of the 7 cases proving Theorem \ref{the:main}(e), started 
in Table \ref{tab:e1} and completed in Table \ref{tab:e3}.  Notation follows \eqref{equ:2kernel}, \eqref{equ:2kernelprime}, and \eqref{equ:deltadeltaK} .}\label{tab:e2} 
\end{center} 
 \end{table}
\end{center}

\begin{center}
\begin{table}[!ht] 
\begin{center}
\begin{tabular}{||c|c|c||} 
\hline \hline
ID &F & G\\
\hline 
Range $\;i$ & 			$m+1 \le i \le 2m-2$					&$i=2m-1$ \\
\hline
$\delta(q_i,1)=q_j$ & 	$q_j=q_{i+1}$ 						&$q_j=q_m$\\
 $(K_j)_2$&			$n_2 \cdot 1^{i-m} \cdot 0 \cdot 1^m $		&$n_2 \cdot 1^m$ \\\hline
$K_i$ & 				$2^{i}(n+1) -2^m -1 $					 &$2^{2m-1}(n+1) -2^m-1$\\
$(K_i)_2$ & 			$n_2 \cdot 1^{i-m-1} \cdot 0 \cdot 1^m$	&$n_2 \cdot 1^{m-2} \cdot 0 \cdot 1^m$	\\
$\delta(K_i,1)=L$ &	$L=2^{i+1}(n+1)-2^m-1 $	 			&$L=2^{2m} n+ 2^{2m}-2^m - 1$\\		
$L_2$  		&		$n_2 \cdot 1^{i-m} \cdot 0 \cdot 1^m$ 		&$n_2 \cdot 1^{m-1} \cdot 0 \cdot 1^m$ \\ 
\hline \hline
\end{tabular}
\caption
{The last 2 of the 7 cases proving Theorem \ref{the:main}(e), continued from 
Tables \ref{tab:e1} and \ref{tab:e2}. Notation follows \eqref{equ:2kernel}, \eqref{equ:2kernelprime}, and \eqref{equ:deltadeltaK} .}\label{tab:e3} 
\end{center} 
 \end{table}
\end{center}

\section
{Attributes and Properties of 
\texorpdfstring{$s_m$}{Sm}}
 
\label{sec:proofmethods}
One contribution of this paper is the exploitation of the correspondence between integers under addition and multiplication and strings under concatenation. To show the usefulness of this method we generalize known results about maximal runs, palindromes, squares, and unbordered factors in the Thue-Morse and Rudin-Shapiro sequences  to all $s_m, m \ge 3.$   Although, as mentioned in the introduction, the double sequence $\{(s_{m,n})_{m \ge 1, n\ge 0}\}$ is not automatic, we can still prove many general results. Many theorems share the
following  common \textit{3-step}
proof method.

\begin{enumerate}
\item [(I)] Reformulate the target result to be proven into a collection of assertions about the values of $s_{m,n}, n \in I_m$ with $I_m$ some indexing set.
\item[(II)] For each  $n \in I_m,$ calculate $n_2.$
\item[(III)] Using the binary representations of Step (II), prove the target result of Step (I)  by counting (possibly overlapping) occurrences of $P_m$ in each $n_2, n \in I_m.$
\end{enumerate}

Because of the similarity of many proofs, in the sequel, we prove: (i) one component of one  theorem, Theorem \ref{the:runs}(e), the proof of 
many of the other components of that and other theorems being similar and hence omitted; and (ii) assertions whose proof method involves non-routine
applications of the 3-step method just presented.

A separate technique is useful for formulating theorems. For example,   to prove a theorem about lengths of
all maximal runs or palindromes,   it is useful to use a \textit{prove more to establish less} method whereby instead of just describing the maximal lengths we are interested in, we describe triples $(k,b,x)$ such that $s_m[b..b+k-1]=x$ and $x$ is a maximal run or palindrome.  Because
patterns for $k,b,x$ are often present, proofs are easier to accomplish. Towards this end of describing patterns in triples,
$k,b,x,$ exploratory pattern analysis for results are done with both Walnut and Mathematica 13.3 which has very powerful pattern matching functions. The following two lines of Mathematica code define $P_m$ and the infinite 
double array ${(s_{m,n})}_{m \ge 1, n \ge 0}.$

{\tolerance 1000
\begin{verbatim} 
P[1] = "1"; P[m_] := P[m] = P[m - 1] <> "1"; 
s[m_, n_] := Mod[StringCount[IntegerString[n, 2], P[m], Overlaps -> True], 2];
\end{verbatim}} 

The sequences can then be decimated to produce and explore patterns in  the length $l$-prefixes of the 2-kernel sequences.
The respective use of these software will be indicated in discussions of each theorem.

\section{Run Positions and Lengths}

We  study maximal runs of $0s$ in $s_m.$ If $p=1 \cdot 0^k \cdot 1=s_m[b .. b+k+1]$  then we say that $0^k$ is a maximal run
of zeroes  of length $k$ starting at position $b+1$ (maximal runs starting at 0 are also allowed and defined analogously). Using Walnut it is straightforward to  show that the longest run of zeroes in $s_2,$ the Rudin-Shapiro sequence, is 4 \cite[Section 8.1.9]{Shallit}. The generalization of this to $s_m$ is that the longest run of zeroes in $s_m$ is $2^m.$ For any particular $m \ge 3,$ we can,  using Walnut, discover all maximal run lengths of $s_m,$ and, again using Walnut, prove that these discovered maximal run lengths are the only ones.  Theorem \ref{the:runs} proves a general result, valid for all
$m \ge 3,$ describing the five types of maximal run lengths. To prove the result  we use the
\textit{prove more to establish less} method.  First, Walnut was used to discover, for individual small $m,$ 
all maximal run lengths, and then, for each discovered maximal run
length, Walnut was used to find a starting position for that maximal run length.  Patterns for the beginning positions and lengths emerged, summarized in Theorem \ref{the:runs}.

\begin{theorem}\label{the:runs}  Fix $m \ge 3,$  and let $IsRun_m(k,b),$  
be the first order statement that $s_m$ has a maximum run of zeroes of length $k$ beginning  at position $b.$  $IsRun_m(k,b)$ is true for all pairs in the following table. 
 \begin{center}
\begin{table}[!ht] 
\begin{center}
\begin{tabular}{||c|c|c|c|c||} 
\hline \hline
Theorem \; ID & $k$ & $b$ & Range\;of\;$i$\\
\hline 
(a) 	&	$2^m$			&	$2^{m+3}-1$					&	\; 			\\
(b)	&	$2^{m-1}$ 		& $(2^{m-1}-1)(2^{m})$			&	\;			\\
(c) 	&	$2^i$			&	$2^{i+1}(2^{m+1}-1)$			&	$0 \le i \le m-2$\\
\hline 
(d)	&	$2^{m-1}+1$		&	$(2^m-1)(2^{m+1} + 1)$		&				\\
\hline 
(e) 	& 	$2^m-2^i$		&	$(2^i-1)2^m$					&  $0 \le  i \le m-2.$ \\
\hline \hline	
\end{tabular}
\caption
{ A list of  cases for which $IsRun_m(k,b)$ is true. Conjecture \ref{con:runs} asserts that these are the only  maximal run lengths.}
\label{tab:runs} 
\end{center} 
 \end{table}
\end{center} 
\end{theorem} 

For the general case of arbitrary $m \ge 3,$  we can prove the assertions of the table, but are unable to prove that the lengths enumerated in the table are the only maximal run lengths. We therefore leave this as an open conjecture, noting, that the conjecture is strongly supported by the Walnut-proven fact that it is true for many initial $m.$  The following is sample Walnut code for $m=4$ which can be adapted to any other specific $m \ge 3.$

\begin{verbatim}
def isrun "k>0 & Eb ((b=0 | S4[b-1]=1) & Aj ((j<k) => (S4[b+j]=0)) & S4[b+k]=1)":
eval theorem "Ak $isrun(k) <=>k=1 | k=2| k=4| k=8 |k=9 | k=12 | k=14 |k=15 | k=16":
\end{verbatim}

\begin{conjecture}\label{con:runs} For each $m \ge 3,$ the  lengths presented in Theorem \ref{the:runs} are   the only maximal run lengths.
\end{conjecture}

We have left to prove Theorem \ref{the:runs}.
\begin{proof}
We suffice, using the 3-step method indicated in Section \ref{sec:proofmethods}, with proving one component of the theorem, in this case, Theorem \ref{the:runs}(e), the proof of the other cases being similar
and hence omitted. We therefore suppose given integers $m,i, 1  \le i \le m-2.$ 

\textbf{Step I:} Let $b=(2^i-1)2^m$ and let $k= 2^m-2^i.$ To prove assertion (e) it suffices to prove the following assertions:
\begin{itemize}
\item $s_m[b-1]=1$
\item $s_m[b]=0$
\item $s_m[b+j] = 0, 1 \le j <   k$    
\item $s_m[b+k] = 1.$
\end{itemize}

\textbf{Step II:} For $1 \le i \le m-2,$ we obtain the following binary representations.
\begin{itemize}
\item $(b-1)_2= 1^{i-1} \cdot 0 \cdot 1^m.$
\item $b_2 = 1^i \cdot 0^m.$
\item $(b+j)_2 = 1^i \cdot j', \text{ where } |j'| =|j_2| \text{ and } j'_v = j, 1 \le j \le   k-1$ ($j'$ equals $j_2$ left-padded with zeroes)    
\item $(b+k)_2 = 1^m \cdot 0^i.$
\end{itemize}

\textbf{Step III:} By inspection, the assertions in Step I immediately follow from the binary representations in Step II.
\end{proof}

\section{Squares}

Recall that a factor of the form $xx$ of a word is called a square and $|x|$ is the order of the square. It is well known that the orders of the squares in the Thue-Morse sequence are $(2^n)_{n\ge 0} \cup (3 \times 2^n)_{n \ge 0}.$ This section generalizes this result for $(s_m)_{m \ge 3}.$
There are three cases to consider corresponding to the next three lemmas. The patterns discussed in 
the last two of these three lemmas are more apparent
in binary representations of numbers, and hence, these lemmas are formulated in terms of binary strings. The proof methods, although based on the 3-step proof method, are slightly different
for each lemma.  The first result is straightforward.  

\begin{lemma}[Trivial] For $m \ge 3,$ there are squares of order $1, \dotsc, 2^{m-1}-1$ in $s_m.$ \end{lemma}
\begin{proof} Be definition,  the first 1 in $s_m$ occurs  at $P_m.$ Since $(P_m)_2 =2^m-1,$   
$s_m[0 .. 2^m-1] = 0^{2^m-1} \cdot 1,$ implying that
$0^i \cdot 0^i, 1 \le i \le 2^{m-1}-1$ are square factors of $s_m.$
\end{proof}.

\begin{lemma}[Near2tom] For $m \ge 3$ and $0 \le i \le m-1,$ 
Table \ref{tab:near2tom} 
presents triples of binary strings $(k,b,x)$ 
for which the first order statement 
\begin{equation}\label{equ:ordernear2tom}
s_m[b .. b+k-1] = x = s_m[b+k.. b+2k-1]
\end{equation}
is true.

\begin{center}
\begin{table}[!ht] 
\begin{center}
\begin{tabular}{||c|c|c|||} 
\hline \hline
$k$							&$b$									&$x$		 	 			\\
\hline
$1 \cdot 0^{{m-1}}  \cdot 1$		 &$1\cdot  0^{m}$				&$  0^{2^m-2} \cdot 1\cdot 0\cdot 0$		\\
$1^i \cdot  0^{m-1-i} \cdot  1$ 	&$1^{m-1-i} \cdot   0^{i+1} \cdot  1^{m-i}$ &$	0^{2^{m-i}(2^i-1)} \cdot  1$		\\
 
\hline \hline	
\end{tabular}
\caption
{For $m\ge 3,$ and $0 \le i < m$ each row of binary string triples $(k,b,x)$ satisfies \eqref{equ:ordernear2tom}. }
\label{tab:near2tom} 
\end{center} 
 \end{table}
\end{center} 
\end{lemma} 
\begin{proof} The proof of the first line of the table follows 
the 3-step method presented in Section \ref{sec:proofmethods} and hence is omitted.
The proof of the second line, while still following this 3-step method, 
needs some extra attention since the number of individual cases grows unbounded as $m$ goes to infinity.
For the proof, we fix $m \ge 3,$ and further fix some $i$ with $0 \le i \le m-1.$ We must first show
that 
\begin{equation}\label{equ:toprovenear2tom}
\begin{cases}
\begin{aligned}
s_m[b+i] & =  0, \qquad \text{ and } s_m[b+k+i] = 0, &	\qquad \text{ for } i  < 1^i \cdot  0^{m-i},  \\
s_m[b+i]& = 1,   \qquad \text{ and } s_m[b+k+i] = 1, &	\qquad \text{ for } i = 1^i \cdot  0^{m-i}.
\end{aligned}
\end{cases}
\end{equation}

By the lemma hypotheses,
$b=1^{m-1-i} \cdot  0^{i+1} \cdot      1^{m-i},$
and hence while $b$ has no occurrences of $P_m,$ 
$b+k-1 = 1^{m-1-i}  \cdot  0^{i+1} \cdot     1^{m-i}+1^i  \cdot  0^{m-i} = 1^{m-1-i} \cdot  0  \cdot  1^m$
has one occurrence of $P_m.$ Furthermore, for $i < k-1,$
$b+i = b+k-1 -(k-1-i) = 1^{m-1-i} \cdot  0  \cdot  1^m -(k-i)$ has no occurrences of $P_m$ since $k-i$ is positive
with less than $m$ binary digits. As required,   $x=  (1^i \cdot 0^{m-i})_v =2^{m-i} 2^i-1.$  
This completes the proof.
\end{proof}
 
The final lemma deals with powers of two. By the Trivial Lemma, if $k \in (2^i)_{0 \le i \le m-2},$ there is a square
of order $k$ in $s_m.$ So we only need to deal with the cases $i \ge m-1.$ We first mention two reasonable 
but failed attempts to prove
that there are squares of all orders $\{2^i: i \ge m-1\}$ in $s_m.$ We then present a 3rd approach which works.

Approach 1, uses the \textit{prove more to establish less} method. Table \ref{tab:failure}  presents the 
results of this attempt.   Although, for several individual $i \ge m$ a pattern emerges for $x$ over all $m,$  no uniform 
pattern emerges over $x$ for both all $m$ and all $i \ge m;$  therefore, approach \;1 fails.  

\begin{center}
\begin{table}[!ht] 
\begin{center}
\begin{tabular}{||c|c|||} 
\hline \hline
$i$			 	&$x$\\
\hline
 $m$		&$0 \cdot 1  \cdot 0^{2^m-2}$ \\
$m+1$		&$0^{2} \cdot  1 \cdot  0^{2^m} \cdot  1 \cdot  0^{2^m-4}$ \\
$m+2$		&$0^4 \cdot  1^2 \cdot  0 \cdot  1 \cdot  0^{2^m-1}\cdot  1 \cdot  0^{2^m-2} 
			\cdot 1 \cdot  0^{2^m} \cdot 1 \cdot  0^{2^m-8}$ \\	
 \hline \hline	
\end{tabular}
\caption
{For $m \ge 3, i \ge m, k=1 \cdot  0^i$, and $b=1^{m+1} \cdot  0^{i+1-m},$ 
computations for several initial $m$ show that $s_m[b..b+k-1]=x$ and $xx$ is a square factor of $s_m.$
However, no apparent patterns for both all $x$ and all $i$ emerge. }
\label{tab:failure} 
\end{center} 
 \end{table}
\end{center} 

Approach 2, seeks other strings $y \neq x$ with $x$ as in Table \ref{tab:failure}, 
such that there is a square $yy$ of order $k$ in $s_m.$ The hope in this approach is
that perhaps there is a pattern over a cleverly selected sequence of squares. However, it turns out, that for the small $k,m$ reviewed, 
there are exactly two distinct squares, $xx$ and $yy$ of order $k$ in $s_m, m \ge 3,$  and moreover $x=\bar{y}.$  
This pretty result is stated (in one direction) as part of  Lemma  \ref{lem:Power2}; at the end of the section, it is proven by Walnut for small $m$ 
accompanied by a conjecture for the general case. However, as a consequence, Approach 2, seeking a cleverly selected sequence of  squares, also does not work. 
However, a 3rd approach  does work.   
To prove $xx$ is a square with $s_m[b .. b+k-1] =x= s_m[b+k ..b+2k-1]$ for some $b,k$, 
it is not necessary to identify patterns in the squares, but rather, it suffices to show 
\begin{equation}\label{equ:toprovesquares} s_m[b+j] = s_m[b+k+j], \qquad 0 \le j \le k-1.\end{equation} 
This in turn can be established using the 
3-step proof method described in Section \ref{sec:proofmethods}.

After stating and proving this lemma, we will show using Walnut, that for
small values of $m,$ 
the three lemmas completely describe all square orders of $s_m$ leading to a conjecture that this is true for all $m.$

\begin{lemma}[Power2]\label{lem:Power2} Fix $m \ge 3$ and $i \ge m-1,$ and define
$$
k=1 \cdot 0^i, b1= 1^{m+1} \cdot  0^{i+1-m},  b2=1^{m-1} \cdot  0 \cdot  1^{m+2} \cdot  0^{i+1-m}, 
x = s_m[b1 .. b1+k-1], \text{ and } y= s_m[b2.. b2+k-1].$$
Then $xx$ and $yy$ are square factors of $s_m$ and $x=\bar{y}.$ 
\end{lemma}

 \begin{proof} The proofs that $x=\bar{y}$ and $xx$ is a square factor of $s_m$ are similar to the proof
that $yy$ is a square factor of $s_m$ and hence omitted. To prove \eqref{equ:toprovesquares},
we consider two cases.  For convenience we let $b=b2.$

 \textbf{Case 1,  $0 \le j  \le u=1^{i+1-m}.$} To show that $s_m[b+j] =s_m[b+k+j],$ define a binary string $j'$ such that
$j' = j_2$ and $|j'| = i+1-m$ (if necessary by left padding
with 0s). Then
$$
	b+j = 	1^{m-1} \cdot 0 \cdot 1^{m+2} \cdot j'		\qquad b+k+j = 1^m \cdot 0^3 \cdot 1^{m-1} \cdot  j'.		
 $$
The result follows because the corresponding exponent pairs 
(of 1) occurring in $b+j$ and $b+k+j,$
$(m-1,m)$ and $(m+2,m-1),$  have opposite parity implying that the count of (possibly overlapping) occurrences
 of $P_m$ in both of them have the same parity.
 
\textbf{Case 2, $0 \le j \le k-1 -(u+1) = 1^{m-2} \cdot 0 \cdot 1^{i+1-m}.$} Defining $j'$ with $j'_2 = j_2$ and $|j'| =i,$ we have 
\begin{equation}\label{equ:case4}
		b+(u+1) +j = 1^{m} \cdot 0^3  \cdot j' , \qquad  
		b+(u+1)+k+j =   1^m \cdot 0^2 \cdot 1 \cdot   j'.
\end{equation}
The result follows because (i) when $j$ is at its maximum,  $j=1^{m-2} \cdot 0  \cdot 1^{i+1-m},$
then, 
$b+(u+1) +j = 1^m  \cdot  0^3  \cdot 1^{m-2}  \cdot 0  \cdot 1^{i-m+1}$ and
$b+k+(u+1) +j = 1^m  \cdot 0^2  \cdot 1^{m-1}  \cdot 0  \cdot 1^{i-m+1};$  
corresponding non-equal exponents $(m-2,m-1)$ are bounded above
by $m$ implying they do not contribute to the count of (possibly overlapping) $P_m;$ 
consequently, (ii) if $j$ is less then its maximum, then the exponents $(m-2,m-1)$ decrease and hence, 
still do not contribute to the count of (possibly overlapping) $P_m.$
\end{proof}
   
We next explore whether the three lemmas exhaust all possibilities of orders of squares in $s_m,$ 
For $m=8,$ the following Walnut code returns
true showing that the $k$ covered by the Trivial, Near2tom, and Power2 cases are the only orders of square factors
appearing in $s_8$ which in the Walnut code is called S8.
 Similar Walnut code can be used to show this for $s_m, m=3,\dotsc, 7$ and other small $m.$

\begin{verbatim}
def trivial "l>=1 & l<=127";
def near2tom "l=129 | l=193 | l=225  | l=241  | l=249  | l=253  | l=255   | l=257";
reg power2 msd_2 "10*"; 
def square "Ai,p1,p2 (l>=1 & s>0 & p1=s+i & p2=p1+l & i<l)=> (S8[p1]=S8[p2])";
eval conjecture "Al $trivial(l)|$near2tom(l)|$power2(l) <=> 
Es s>=1 & l>=1 & $square(l,s)";
\end{verbatim}

This leads to the following conjecture.

\begin{conjecture} For $m \ge 3$ the three lemmas list all $k \ge 1$ for which $s_m$ has square factors of order $k.$ \end{conjecture}

We next return to the Power2 lemma assertion that  for $i \ge m,$ the squares of   order $2^i$ come in pairs which are
binary reversals of each other. The following Walnut code returns true proving that in $s_3,$
which in the Walnut code is called S3,
  every square factor of order
$2^i, i \ge 3$ is equal as a factor to either $xx$ or $yy,$ where
$x$ and $y$ are as defined in the Power2 lemma. Similar Walnut
code would provide proofs for other small $m.$  

\begin{verbatim}
def sq "Al,p1,p2 i<l & p1=s+i & p2=p1+l => S3[p1]=S3[p2]":  
#[s..s+l-1]=x and xx is square in S3 
def eq "Ai,s1,s2 i<l & p1=s1+i & p2=s2+i => S3[p1]=S3[p2]": 
#S3[s1..s1+l-1]= S3[s2..s2+l-1]  
def br "Ai,p1,p2 i<l & p1=s1+i & p2=s2+i=> (S3[p1]=@1 <=> S3[p2]=@0)": 
# S3[s1..s1+l-1] ,S3[s2..s2+l-1] are binary reversals of each other
reg power2 msd_2 "10*"; #The argument of power2 is a binary power
eval test "Al,s,t 
( l>=8 & $power2(l) & $sq(l,s) & $sq(l,t) & ~$eq(l,s,t))=>$br(l,s,t)": 
#The 2 distinct squares of order l given in the Power2 lemma are binary reversals
eval test "As,t,u,l 
($sq(l,s) & $sq(l,t) & $sq(l,u) & l>=8 & $power2(l)) => 
($eq(l,s,t) | $eq(l,s,u) | $eq(l,u,t))":
#There are no 3 distinct squares of order l when l is a power of 2
\end{verbatim}

These results lead to the following conjecture.
\begin{conjecture} For $m \ge 3, i \ge m-1,$ 
the Power2 Lemma lists the only two distinct square factors of order $2^i$. Moreover these two distinct factors are binary reversals of each other.\end{conjecture}   
 
\section{Proof Method Evaluated by the Arithmetic Hierarchy}\label{sec:hierarchy}

The previous two sections showed that the proof method presented in Section \ref{sec:proofmethods}) gives almost mechanical, short, compact proofs for a variety of assertions. Contrastively, the next two sections will show statements on which the proof method helps little  or not at all. 
 
A little reflection shows that the sections differ in the complexity of their respective assertions as defined by the arithmetic hierarchy (which classifies first-order logical statements by the number of alternations of quantifiers in their normal form). The \textit{prove more to establish} less method when applied in the last two sections dealt with $\Sigma_2$ statements: for example, $k$ is a maximal run length means there is some beginning position, $b,$ such that for all indices $k',$ with $1 \le k' <k,$ $s_m[b+k']=0, s_m[b]=1=s_m[b+k].$  A similar analysis applies to the assertion that $k$ is the order of a square factor of $s_m.$  The proof method, by making explicit the patterns in starting positions, reduces the
$\Sigma_2$ assertion to a $\Pi_1$ assertion, facilitating proofs. 

Contrastively, assertions about unbordered factors of a fixed length, say $k,$ meaning there is a beginning position $b,$ such that for all possible positive boundary lengths $b'<k,$ there is some index breaking the equality of a length $b'$ prefix and suffix,  are $\Sigma_3$ statements.
Removing one quantifier from a $\Sigma_3$ statement does not help that much in finding proofs as we shown in detail in coming sections. 

Palindromes offer possibilities of both $\Sigma_2$ and $\Sigma_3$ statements, the former amenable to the proof methods while the latter are not. Palindromicity of a specified length $k,$ means there exists a beginning position, $b,$ such that for all  $k' <k, s_m[b+k']=
s_m[b+k-1-k']$ (with some additional technicalities if $b=0$ which case we ignore for this discussion). 

Hence, palindromicity is a $\Sigma_2$ statement. Local maximal palindromicity simply adds the requirement $s_m[b-1] \ne s_m[b+k],$ and hence is still $\Sigma_2.$ Contrastively, global maximal palindromicity adds the requirement that for all other beginning positions $b',$ either their exists a $k'<k$ breaking the palindromicity starting at $b'$ or
$s_m[b'-1] \ne s_m[b'+k]$ and hence is $\Sigma_3.$  Proofs of assertions connected with global maximal palindromicity seem difficult to obtain using our proof methods. 

While these remarks are observational, they are consistent other studies of first-order statement complexity as measured by the arithmetic hierarchy \cite{Complexity1, Complexity2}.

\section{Palindromes}
There are two distinct definitions of maximal palindromes, \textit{locally maximal palindromes} which for a given sequence, $t,$ 
asserts that if for some $b \ge k,k\ge 1,$  $t[b .. b+k-1]$ is a palindrome then $t[b-1 .. b+k]$ is not a 
palindrome, and 
\textit{globally maximal palindromes}, which for a given sequence, $t,$ asserts that if $y=t[b .. b+k-1]$ is a palindrome then 
$aya$ is not a factor of $t$ for any letter $a.$ (Separate but similar definitions must be given for the case $b=0.$) For the Thue-Morse sequence,  the maximal palindrome lengths are $(3 \times  4^n)_{n \ge 0}$. For the Rudin Shapiro sequence, the maximal palindrome lengths are
$7, 10, 14,$ (see \cite[Section 8.6.4]{Shallit} which formulates the definition of maximal local and global palindromes differently).   

We shall study the $n$-th \textit{largest maximal palindrome lengths}, for $n=1,2,3,4,5.$ For a given sequence, a length $k$ is an $n$-th largest maximal palindrome length, if  $k$
 is a global maximal palindrome length and there are only $n-1$ distinct lengths, $k'>k$ for which there are maximal global palindromes of length $k'.$ 

It is easy with Walnut to completely discover and prove the $n$-th largest maximal global palindrome lengths for any particular $s_m.$ To prove a general theorem, we use the \textit{prove more to establish less} method. 

In exploring patterns, Walnut was used to discover the maximal palindrome lengths and appropriate starting positions, while Mathematica was used to discover the underlying palindrome patterns. For each $n \in \{1,2,3,4,5\}$ patterns emerged. It was hoped that these 5 cases of $n$ would suffice to establish a pattern for all $n;$ however, the results for each $n$ appear quite different and distinct. They are summarized in the following theorem and conjecture.

\begin{theorem}\label{the:palindromes} Fix $m \ge 3.$ Tables \ref{tab:palindromesab}-\ref{tab:palindromese}
 present quadruples $(n,k,b,x)$ such that $x=s_m[b..b+k-1]$ is a locally maximal palindrome. \end{theorem}
\begin{conjecture}\label{con:palindromes} The palindromes listed in Tables \ref{tab:palindromesab}-\ref{tab:palindromese} are the 
$n$-th largest globally maximal palindromes for
$n=1,2,3,4,5.$ \end{conjecture}
\begin{remark}  As already indicated, for $3 \le m \le 8,$ straightforward Walnut code proves Conjecture \ref{con:palindromes}. \end{remark}

\begin{center}
\begin{table}[!ht] 
\begin{center}
\begin{tabular}{||c|c|c||} 
\hline \hline
Value of $n$ 		& $n=1$	 						& $n=2$ 	\\
\hline 									 	
$k$				&$2^{m+1}+2^m+6$ 				&$2^{m+1}+2^m-6$ 		  \\
$b$				&$2^{m+3}-4$					&$2^m+2$							 \\	
$x=s_m[b .. b+k-1].$   &
$0^2 \cdot 1 \cdot 0^{2^m} \cdot 1 \cdot 0^{2^m - 2} \cdot 1 \cdot 0^{2^m} \cdot 1 \cdot 0^2$ &
$0^{2^m-4} \cdot 1 \cdot 0^{2^m} \cdot 1 \cdot 0^{2^m-4}$ \\
\hline \hline	
\end{tabular}
\caption
{The $n$-th largest global palindrome length, $k,$ for $s_m,$ $3 \le m \le 8,$ for  $n=1,2,$  with one possible starting position, $b,$ and the underlying palindrome pattern, $x.$ It is conjectured that the table holds for all $m.$}
\label{tab:palindromesab} 
\end{center} 
 \end{table}
\end{center} 

\begin{center}
\begin{table}[!ht] 
\begin{center}
\begin{tabular}{||c|c|c|c||} 
\hline \hline
Case of $n$		 	&$n=3$ 				&$n=4$			 	\\
\hline
 $k$				 &$2^{m+1}-3$		&$2^m+3$	  \\
$b$				 &$2^m$			&$2^{m+2}-2$  	\\	
$x=s_m[b \cdots b+k-1].$   &
$0^{2^m-2} \cdot 1 \cdot 0^{2^m-2}$	 &
$0 \cdot  1 \cdot 0^{2^m - 1} \cdot  1 \cdot 0$	 	\\
 \hline \hline	
\end{tabular}
\caption
{The $n$-th largest palindrome length, $k,$ for $s_m,$ for $3 \le m \le 8,$  for  $n=3,4,$  with one possible starting position, $b,$ and the underlying palindrome pattern, $x.$ It is conjectured that the table holds for all $m.$ }
\label{tab:palindromescd} 
\end{center} 
 \end{table}
\end{center} 

\begin{center}
\begin{table}[!ht] 
\begin{center}
\begin{tabular}{||c|c|||} 
\hline \hline
Case of $n$		 	 		&$n=5$		\\
\hline
$k$				  			&$2^m+1$ \\
$b$				 			&$  3 \times 2^{m - 2} + (2^{m - 1} - 1)2^{m + 1} - 1$	\\	
$x=s_m[b \cdots b+k-1]$		&   
 $0^{2^{m-2}} \cdot 1^{2^{m-1}+1} \cdot 0^{2^{m-2}}	$	\\
 \hline \hline	
\end{tabular}
\caption
{The $n$-th largest palindrome length, $k,$ for $s_m,$ $3 \le m \le 8,$  for  $n=5,$  with one possible starting position, $b,$ and the underlying palindrome pattern, $x.$ It is conjectured that the table holds for all $m.$ }
\label{tab:palindromese} 
\end{center} 
 \end{table}
\end{center}

 \begin{proof} The proofs for all five cases use the 3-step proof method, are similar to the proof of Theorem \ref{the:runs}(e), and hence are omitted.

\end{proof}

\begin{remark} 
The lack of obvious generalization of the palindromes in Tables \ref{tab:palindromesab} - \ref{tab:palindromese} affords  an opportunity to contrast the respective strengths and weaknesses of automatic proofs and  other proof methods. Automatic proofs, while error free and quick,  require a specific conjecture, typically the result of exploratory analysis, to prove. 
Contrastively, other proof methods suggest important pattern classes
and attributes during the exploratory process; for example, in the current paper, the
3-step proof method suggests
exploring patterns in the binary representations. Thus the proof method, besides being used for proofs, is also beneficial for exploration.
\end{remark}

\section{Unbordered Factors}

We begin with a discussion of the local case.

\begin{lemma} For $3 \le m \le 8, k \ge 1,$ there is an unbordered factor of length $k$ in $s_m.$ \end{lemma}
\begin{proof} Standard Walnut code, \cite[Section 8.6.13]{Shallit}. \end{proof}

This naturally motivates the following conjecture.
\begin{conjecture} For every $m \ge 3, k \ge 1,$ there is an unbordered factor of $s_m$ of length $k.$ \end{conjecture} 

We then proceed as usual using the \textit{prove more to establish less} method. Motivated by results for small $m$ establishable
with Walnut, we have the following conjecture for one infinite subset of the integers.

\begin{conjecture}\label{con:borderpower2} Let $m \ge 3.$ Then  \\
(a) For odd $m, e \ge 0,$ $s_m[0..2^{m+e}-1]$ is an unbordered factor of $s_m$.\\
(b) For even $m,e \ge 0,$ $s_m[2^m-1..(2^m-1)+(2^{m+e}-1)]$ is an unbordered factor of $s_m$.\\
\end{conjecture}.

However, we were unable to prove these conjectures despite the abundance of patterns. For example, for  $m$ odd, we can completely describe with proof that  
$\{n < 2^{m+e+2},e \ge 0: s_m[n]=1\}= \{1^m_v, 1^m \cdot 0_v, 1 \cdot 0 \cdot 1^m_v, 1^m \cdot 0 \cdot 0_v, 1^m \cdot 0 \cdot 1_v, 1^{m+2}_v\}.$   The situation for the suffixes is more complicated but 
there are definite patterns of which we mention one: For $m = 3+e, n=2m+1,$ one suffix of $s_m[0..2^n-1]$ is
$1^m \cdot 0^{m+1}+ \{j: 0 \le j < 2^{m}-\epsilon(m)\}
\cup_{i=1}^{\frac{m-1}{2}} \biggl( 1^{m+2i} \cdot 0^{m+1-2i}+\{j: 0 \le j < 2^{m-2i}-1\} \biggr)
\cup \{1^n\},$
with $\epsilon(m) \in \{1,2\}$ (it appears that $\epsilon(m)=1, \text{ if } m \equiv \pm 1 \pmod{8} \text{ and  $2$ otherwise}.$)
   The attractive feature of this suffix is that the gaps between $p$ such that $s_m[p]=1$ is bounded by 1 with a finite number of exceptions with predictable properties.

Despite these patterns, no proof was found for   the assertions of Conjecture \ref{con:borderpower2}. One obstacle is that the prefix of length
$2^{m+e+2}, m \ge 3, m \text{ odd, }$ described above can repeat quite often (but it doesn't repeat right before the terminal suffix.) In Section \ref{sec:hierarchy} 
we attributed this difficulty in proof to 
the fact that statements that a factor is unbordered are $\Sigma_3.$

\section{Conclusion}

This paper has studied a simply-defined family of automatic sequences. By using a proof method based on 
a correspondence between binary strings under concatenation and integers under addition and multiplication, several 
interesting results were proven (and/or conjectured) about maximal runs, palindromes, squares, and borders. We therefore believe
that \textit{families of sequences} is a fruitful concept  extending
the automatic sequences similar to the regular and synchronized sequences. Supporting this viewpoint, we point out that
another well known family of sequences, the Sturmian sequences \cite{Pecan1}, has recently received attention including
automatic proof software based on B\"uchi automata \cite{Pecan3} similar to Walnut, that can prove many 
interesting results \cite{Pecan2}. It is even possible that this software can prove many of the conjectures in this paper
but no attempt was made to do this.

\end{document}